\newtheorem{proposition}{Proposition}  %[section] % To number by section 0.1, 1.2, etc
\newtheorem{lemma}{Lemma}
\newtheorem{theorem}{Theorem}
\newtheorem{corollary}{Corollary}
\newtheorem{remark}{Remark}
\theoremstyle{definition}
\newtheorem{definition}{Definition}
\newtheorem{axiom}{Axiom}
\newtheorem{example}{Example}
\renewcommand\thmcontinues[1]{Continued}
\newcommand{\R}{\mathbb{R}}
\newcommand{\F}{\mathcal{F}}
\newcommand{\B}{\mathcal{B}}
\newcommand{\ra}{\rightarrow}
\newcommand{\LRA}{\Longleftrightarrow}
\DeclareMathSymbol{\Alpha}{\mathalpha}{operators}{"41}
\DeclareMathSymbol{\Beta}{\mathalpha}{operators}{"42}
\DeclareMathSymbol{\Epsilon}{\mathalpha}{operators}{"45}
\DeclareMathSymbol{\Zeta}{\mathalpha}{operators}{"5A}
\DeclareMathSymbol{\Eta}{\mathalpha}{operators}{"48}
\DeclareMathSymbol{\Iota}{\mathalpha}{operators}{"49}
\DeclareMathSymbol{\Kappa}{\mathalpha}{operators}{"4B}
\DeclareMathSymbol{\Mu}{\mathalpha}{operators}{"4D}
\DeclareMathSymbol{\Nu}{\mathalpha}{operators}{"4E}
\DeclareMathSymbol{\Omicron}{\mathalpha}{operators}{"4F}
\DeclareMathSymbol{\Rho}{\mathalpha}{operators}{"50}
\DeclareMathSymbol{\Tau}{\mathalpha}{operators}{"54}
\DeclareMathSymbol{\Chi}{\mathalpha}{operators}{"58}
\DeclareMathSymbol{\omicron}{\mathord}{letters}{"6F}
\definecolor{purple}{RGB}{85, 6,139}
\definecolor{teal}{RGB}{2,108,128}
\definecolor{lavender}{RGB}{129, 102, 122}
\definecolor{carolina blue}{RGB}{68, 157, 209}
\definecolor{phthalo blue}{RGB}{2, 8, 135}
\definecolor{purple2}{RGB}{149, 96, 219}
\definecolor{green1}{RGB}{96, 219, 117}
\title{\textsc{Conservative Updating}}  %%%   TITLE %%%
\author{\href{https://www.matthewkovach.com/}{Matthew Kovach}\footnote{Department of Economics, Virginia Tech.  E-mail: mkovach@vt.edu. I would like to thank Aur\'elien Baillon, Jonathan Chapman, Federico Echenique, Bart Lipman, Pietro Ortoleva, Kota Saito, Marciano Siniscalchi, Gerelt Tserenjigmid, and Leeat Yariv for helpful feedback. This paper was previously titled \emph{Sticky Beliefs: A Characterization of Conservative Updating} and is based on chapter 3 of my dissertation at Caltech.}}
\begin{document}
\maketitle

\vspace{7 mm}

\noindent{\textbf{Abstract:} This paper provides a behavioral analysis of conservatism in beliefs. I introduce a new axiom, \nameref{Dom-C}, that relaxes \nameref{DC} when information and prior beliefs ``conflict.''  When the agent is a subjective expected utility maximizer, \nameref{Dom-C} implies that conditional beliefs are a convex combination of the prior and the Bayesian posterior.  Conservatism may result in belief dynamics consistent with confirmation bias, representativeness, and the good news-bad news effect, suggesting a deeper behavioral connection between these biases.  An index of conservatism and a notion of comparative conservatism are characterized. Finally, I extend conservatism to the case of an agent with incomplete preferences that admit a multiple priors representation.  

\vspace{7 mm}

\noindent{\textbf{Keywords:} Conservative updating, prior-bias, non-Bayesian updating, confirmation bias, representativeness, good news-bad news effect, multiple priors.}
\vspace{5 mm}

\noindent{\textbf{JEL:} D01, D81, D9.}

\vspace{30 mm}

\pagebreak
%\end{comment}
%\begin{document}

\section{Introduction}

Many papers in both economics and psychology have documented biases in belief updating (see, for instance, \cite{camerer1995}, \cite{Kahneman1972}, \cite{ElGamal1995}).  A recurrent finding is that people tend to exhibit conservatism (\cite{Phillips1966}, \cite{edwards1982}, \cite{Aydogan2017}, and \cite{mobius2014}). A conservative updater only partially incorporates new information into her beliefs; hence she puts too much weight on her prior.  Despite its prevalence, conservatism has yet to be behaviorally founded. 

I introduce a novel behavioral postulate, \nameref{Dom-C}, to capture conservatism within the framework of conditional preferences over acts (see \cite{Savage1954} and \cite{anscombe1963}). This axiom weakens \nameref{DC} to accommodate ``preference stickiness.'' Put loosely, \nameref{Dom-C} allows for violations of \nameref{DC} only when the initial preference and the information are in conflict. Further, I show that conservative preferences are consistent with several well-known biases, including confirmation bias, the representativeness heuristic, and the good news-bad news effect.

For a deeper intuition behind \nameref{Dom-C}, consider the following hypothetical of an agent's reaction to information regarding climate change. Suppose this agent initially favors the use of coal for power, yet she concedes that using alternative energy is better if climate change is occurring. If she obtains information regarding the scientific consensus that climate change is occurring, would she now support using alternative energy over coal? If she is dynamically consistent (i.e., Bayesian), then the answer is yes because she has already conceded that alternative energy is better in that contingency.  If she is conservative, she may still prefer coal. When there is a conflict between information (climate change news) and an agent's initial preference (use coal), conservatism may result in a violation of \nameref{DC}. \nameref{Dom-C} allows for such violations. 

 \nameref{Dom-C} restricts the agent so that she may violate \nameref{DC} only in situations in which there is a conflict between the information and her initial preference. To illustrate, consider how our agent would feel if she had initially preferred alternative energy to coal. Now that the evidence for climate change is consistent with her initial preference, she must continue to prefer alternative energy. That is, no matter how conservative she is, it would be absurd for her to (i) initially support alternative energy, (ii) acknowledge that alternative energy is better if climate change is occurring, and then (iii) state that she supports coal upon receipt of information regarding the consensus that climate change is occurring. \nameref{Dom-C} rules out reversals of this form. 
  
 Suppose the agent is a subjective expected utility maximizer (SEU) and her prior beliefs are given by a probability distribution $\mu$. In this case, \nameref{Dom-C} characterizes a subjective expected utility agent whose posterior beliefs after $A$, denoted $\mu_A$, are a convex combination of the prior and the Bayesian posterior: 
\begin{equation}\label{belief}
\mu_A = \delta(A) \mu +  (1-\delta(A)) \B(\mu,A) 
\end{equation}
where $\delta(A) \in [0,1]$ and $\B(\mu,A)$ is the Bayesian update of $\mu$ given $A$.  When $\delta(A) >0$, the agent is reluctant to move away from her initial beliefs, and therefore she is conservative. When $\delta(A)=0$, she is Bayesian; when $\delta(A)=1$, she is unresponsive to the information. Consequently,  $\delta(A)$ can be interpreted as her degree of conservatism (at $A$).\footnote{Relatedly, $1-\delta(A)$ might be thought of as a measure of her confidence in the new information. This interpretation will be useful for the discussion of source dependence in \autoref{source}.} In general, a collection of conditional preferences admits a \textbf{conservative subjective expected utility} representation when posterior beliefs are given by \autoref{belief} for each event.

An important feature of the representation in \autoref{belief} is the dependence of the degree of conservatism, $\delta(A)$, on the realized event: bias may be source-dependent. Because of source dependence, conservatism is able to capture belief dynamics consistent with confirmation bias (\citet{nickerson1998}), the representativeness heuristic (\citet{Kahneman1972}), and the good news-bad news effect (\cite{Eil2011}). Although these biases have been thought of as distinct aspects of behavior, this finding suggests that there may be a core behavioral aspect of conservatism that unifies these biases.%, which important because  

I then provide an analysis of various aspects related to an agent's degree of conservatism. Intuitively, $\delta(A)$ captures how conservative an agent is (at $A$). When $\delta(A)$ is event independent, $\delta(A)=\delta$ for some $\delta \in [0,1]$, then the agent is described by a single behavioral parameter and $\delta$ is an index of conservatism.  I establish a behavioral characterization of this case through \nameref{WC}, a novel weakening of \nameref{C}.  I then provide a simple definition of comparative conservatism that rests upon the comparison of a binary act with a constant outcome. This provides an efficient method for ordering people by each one's degree of conservatism. 

I close by exploring the implications of conservatism when agents do not have precise beliefs. To do so, I consider a collection of possibly incomplete conditional preferences and impose an adapted version of \nameref{Dom-C} on these preferences, which I call \nameref{WUC}. In this setting, the agent has a set of possible beliefs and prefers an act, $f$, to another, $g$,  when $f$ provides (weakly) greater expected utility than $g$ for every possible belief. \nameref{WUC} characterizes a conceptually similar representation to \autoref{belief}:  the agent's set of posterior beliefs is derived by mixing the initial set of priors and the set of Bayesian posteriors. A crucial distinction from the subjective expected utility case is that now information may induce a genuine expansion of the set of beliefs.  Thus subjectively ambiguous information may result in a structural change in behavior; an initially subjective expected utility agent may become ambiguity averse after information.

\subsection{Outline and Related Literature}

The setup and model are presented in \autoref{model}. Behavioral foundations are presented in \autoref{axioms}. The connection between conservatism and other belief biases is discussed in \autoref{source}. The analysis of degrees of conservatism is discussed in \autoref{discussion}. I present the extension to multiple priors and incomplete preferences in \autoref{multibeliefs}. The remainder of this section discusses related literature. 

While numerous experimental papers suggest that people do not update their beliefs in a Bayesian manner, relatively few provide an axiomatic analysis of non-Bayesian updating.  \citet{epstein2006} provided one of the first axiomatic analysis of non-Bayesian updating. He utilized a setup of preferences over menus and modeled non-Bayesian updating as a temptation (\'a la \cite{Gul2001}) to modify prior beliefs in response to an interim signal. This was extended to an infinite-horizon model by \citet{Epstein2008}.  Because of the menu-preference setting, beliefs are typically dependent on both the information and the option set. More recently, \cite{Ortoleva2012} utilized the conditional preference approach to introduce the Hypothesis Testing Representation.  In his model, a decision maker may be surprised by low probability events and, in response, adopt a new prior before applying Bayes' rule.  This representation captures ``over-reaction" to information and is conceptually distinct from conservatism.\footnote{Additionally, such a decision maker always satisfies \nameref{C}, which is typically violated by conservative updating.} 

The notion of conservative updating I characterize results in violations of \nameref{DC} and \nameref{C}. While violations of these properties are well documented, the causes of these violations are still not fully understood. Additionally, violations of \nameref{C} are relatively understudied compared to violations of \nameref{DC}. Violations of \nameref{DC} and \nameref{C} were documented by \cite{Dominiak2012} in an Ellsberg-style experiment. While these violations were more frequent among ambiguity averse subjects, they also occur among ambiguity neutral subjects.\footnote{Further, all subjects report a "loss of confidence" in their choices after information, with the greatest loss of confidence occurring among subjects who violated both \nameref{DC} and \nameref{C}. This is consistent with the interpretation of $1-\delta(A)$ as the degree of confidence in the information, with lower confidence leading to more violations.} More recently, \cite{Shishkin2019} documented behavior consistent with violations of \nameref{C} in an experimental setting.

Conservative updating, as defined in \autoref{belief}, has recently been applied to cheap-talk games in \cite{Lee2019}.  They found that a conservative receiver may induce the sender to provide more accurate signals, and thus conservatism may be welfare improving. Hence, while my results show that conservatism leads to ``mistakes'' in individual choice (the agent violates both \nameref{DC} and \nameref{C}) there may be benefits from conservatism in other situations. \cite{declippel2020} extended the model of Bayesian persuasion (\cite{Kamenica2011}) to general non-Bayesian receivers (e.g., a patient who exhibits conservatism bias).

\section{Model and Setup}\label{model}

There is a (nonempty) finite set $S$ of states of the world, a collection of events given by an algebra $\Sigma$ over $S$, and a (nonempty) set of consequences $X$. Let $\F$ denote the set of functions $f:S \rightarrow X$, referred to as an act.\footnote{It is not difficult to extend to an infinite state space by assuming $\Sigma$ is a $\sigma$-algebra and $\F$ is the set of finite-valued $\Sigma$-measurable functions.}  Following a standard abuse of notation, for any $x \in X$, by $x \in \F$ I mean the constant act that returns $x$ in every state.  Lastly, for any $f, g \in \F$ and for any $A \in \Sigma$, let $fAg$ denote the act that returns $f(s)$ when $s \in A$ and returns $g(s)$ when $s \in A^c \equiv S\backslash A$.  Following the literature, I assume that $X$ is a convex subset of a vector space.\footnote{For instance, $X$ may be an interval of monetary prizes or, as in the classic \citet{anscombe1963} setting, a set of lotteries over some prize space.}  Thus, mixed acts can be defined point-wise, so that for every $f, g \in \F$ and $\lambda \in [0,1]$, by $\lambda f + (1-\lambda)g$ I mean the act that returns $\lambda f(s) + (1-\lambda)g(s)$ for each $s \in S$.    

I assume that the agent has preferences over $\F$ conditional on her information. Formally, the agent has a collection of preference relations, $\{ \succsim_A \}_{A \in \Sigma}$ over $\F$, where $\succsim_A$ are her preferences after observing $A$.  Let $\succ_A$ and $\sim_A$ represent the asymmetric and symmetric parts of $\succsim_A$.  The case when the agent has no information is represented by $\succsim_{S}:=\succsim$. For an event $A$, say that $A$ is $\succsim$-null (or simply null) if $fAg \sim g$ for any $f,g \in \F$. Otherwise, $A$ is non-null.  Let $\Sigma_{+} \subset \Sigma$ denote the collection of non-null events. 

Let $\Delta(S)$ denote the set of probability measures over $S$. Typical elements, $\mu, \pi \in \Delta(S)$ are called beliefs. For any probability $\mu$ and non-null event $A \in \Sigma_+$, define the Bayesian update of $\mu$ given $A$ by $\B(\mu,A)(B)=\frac{\mu(B\cap A)}{\mu(A)}$ for $B \in \Sigma$. Finally, for any two utility functions $u,v:X \ra \R$, say $u \approx v$ if $u$ is a positive affine transformation of $v$.

\begin{definition}Say that a collection of preferences $\{\succsim_A\}_{A \in \Sigma}$ admits a \textbf{conservative subjective expected utility} (conservative SEU) representation if there are a non-constant utility function $u:X \rightarrow \mathbb{R}$, a prior probability $\mu \in \Delta(S)$, and a function $\delta:\Sigma_+ \ra [0,1]$ such that for all $A \in \Sigma_+$, 
\[ f \succsim_A g \Longleftrightarrow \sum_{s \in S}u(f(s))\mu_A(s) \geq \sum_{s \in S}u(f(s))\mu_A(s)\] and 
\[\mu_A = \delta(A) \mu + (1-\delta(A))\mathcal{B}(\mu,A).\]
\end{definition}

\bigskip

\begin{example}\label{ConfBias}There are two payoff states, $\mathbb{P}=\{R,B\}$ and two signals $\Theta=\{r, b\}$. Let $S=\mathbb{P} \times \Theta$ and, slightly abusing notation, let $\mathbf{r}=\{(R,r),(B,r)\}$ and $\mathbf{b}=\{(R,b),(B,b)\}$ denote the events corresponding to an $r$ or $b$ signal, respectively. The prior $\mu \in \Delta(S)$ is given in \autoref{ex1} below.

\begin{table}[h]
\centering
\begin{tabular}{ | c | c | c |}
\hline $\mu$ & $r$ & $b$ \\
\hline $R$  & $4/8$   & $1/8$   \\
\hline  $B$ & $1/8$   & $2/8$ \\  \hline
\end{tabular}
\caption{Prior over $S=\mathbb{P} \times \Theta$.}\label{ex1}
\end{table}

This example has a natural interpretation as a standard signaling experiment with prior over $\mathbb{P}$ given by $\mu_{\mathbb{P}}=(\frac58,\frac38)$, and (asymmetric) signal accuracy $\sigma(r|R)=\frac{4}{5}$, $\sigma(b|B)=\frac{2}{3}$. 

If the agent admits a conservative SEU representation, posterior beliefs after $\mathbf{r}$ and $\mathbf{b}$, written in terms of the marginal belief on $R$, are
\[\mu_{\mathbf{r}}(R) = \delta(\mathbf{r})\frac{5}{8} + (1-\delta(\mathbf{r}))\frac{4}{5}\]
\[\mu_{\mathbf{b}}(R) = \delta(\mathbf{b})\frac{5}{8} + (1-\delta(\mathbf{b}))\frac{1}{3}.\]
%=\delta(\mathbf{r})\mu(R) + (1-\delta(\mathbf{r}))\mu(R\mid r)
One feature of the representation is that $\delta$ is event (i.e., signal) dependent, allowing for a rich variety of other ``biases'' to emerge. For instance, when $ \delta(\mathbf{r}) <  \delta(\mathbf{b})$, the agent's beliefs exhibit confirmation bias in addition to conservatism. That is, because her conservatism bias is greater when she receives a signal that is counter to her ``prior hypothesis'' (i.e., $\mu(R) > \frac12 > \mu(B)$) she is more reluctant to incorporate ``disconfirming information." This and other biases are further discussed in \autoref{source}. 

\end{example}

\section{Behavioral Foundations}\label{axioms}

The first axiom imposes a subjective expected utility (SEU) representation at each information set. The conditions for this are well-established in the literature. 

\begin{axiom}[Conditional SEU]\label{CsEU} For each $A \in \Sigma$, $ \succsim_A$ admits a non-degenerate subjective expected utility representation. 
\end{axiom}

Before introducing the conservatism axiom, for comparison, I first state the classic axioms of \nameref{DC} and \nameref{C}.  An excellent discussion of both axioms is provided in \citet{Ghirardato2002}, and so I will only briefly discuss them.  

\begin{axiom}[Dynamic Consistency]\label{DC} For any $A \in \Sigma$, $A$ non-null, and for all $f,g \in \F$,  $$ fAg \succsim g \implies f\succsim_A g.$$
\end{axiom} 
\begin{axiom}[Consequentialism]\label{C}  For any $A \in \Sigma$ and for all $f,g \in \F$, $$f(s)=g(s) \mbox{ for all } s \in A \implies f\sim_Ag.$$
\end{axiom}

In essence, $fAg \succsim g$ reveals that the decision maker believes she would abandon $g$ for $f$ if $A$ occurred. \nameref{DC} states that if this is the case, then $f$ must be preferred to $g$ after being told $A$ has occurred.  \nameref{C} states that whenever two acts are identical within $A$, the agent must be indifferent between them after $A$.  It is well known that \nameref{DC} and \nameref{C}, when combined with \nameref{CsEU}, are necessary and sufficient for Bayesian updating \citep{Ghirardato2002}. Hence, these must be relaxed to allow for conservatism.

To illustrate how these axioms are relaxed, recall the introductory example on the use of alternative energy versus coal. The act $f$ is ``use alternative energy,'' $g$ is ``use coal,'' and the event $A$ is ``climate change is occurring.''  The agent concedes that alternative energy is better if climate change is occurring.   Depending on the agent's initial preference between $f$ and $g$, there are two relevant cases: 
\[\text{\bf Case 1: } f \succsim g \text{ and } fAg \succsim g,\]
\[\text{\bf Case 2: } g \succ f \text{ and } fAg \succsim g.\]

In case 1, the preference for the fixed action $f$ (over $g$) is consistent with the preference for the contingent action $fAg$ (over $g$). In case 2, the preferences are in conflict. \nameref{DC} requires the agent to conclude that $f$ is better than $g$ in the conditional preference in both cases. However, intuition suggests that a conservative agent may violate \nameref{DC} in case 2.  Indeed, she does so because she does not fully believe $A$ has occurred. Similarly, a conservative agent would never violate \nameref{DC} in case 1 since she prefers $f$ to $g$ irrespective of $A$'s occurrence.  The following axiom takes this intuition as the defining behavior of conservatism.

\begin{axiom}[Dynamic Conservatism]\label{Dom-C} For any $A \in \Sigma_+$ and for all $f,g \in \F$,  
$$\begin{rcases}\mbox{(i)} &f \succsim g \\
			\mbox{(ii)} &fAg \succsim g 
\end{rcases} \implies f\succsim_A g.$$
Further, if both (i) and (ii) are strict, then $f \succ_A g$. 
\end{axiom} 

\nameref{Dom-C} requires that if the agent (i) prefers $f$ to $g$ ex-ante and (ii) forecasts that she would abandon $g$ for $f$ if $A$ occurred, $fAg \succsim g$, then she must prefer $f$ to $g$ conditional on $A$. Therefore, while \nameref{Dom-C} allows for some violations of \nameref{DC} (e.g., the climate change example), it restricts violations to cases where the agent cannot be completely sure that she is making the right decision. Hence, we can think of \nameref{Dom-C} as allowing for a form of "stickiness" or skepticism about new information. Seen through this lens, \nameref{Dom-C} can be viewed as a cautious response when the reliability of information is (subjectively) uncertain.\footnote{Another way of viewing \nameref{Dom-C} is through a two-self interpretation. Condition (i) captures a self that does not learn, while condition (ii) captures a Bayesian self. \nameref{Dom-C} states that when the selves agree, then the agent's behavior necessarily reflects this agreement. When they disagree, then \nameref{Dom-C} says nothing. The representation result, however, shows that the agent's behavior must be governed by a ``compromise'' belief.} Consequently, \nameref{Dom-C} permits an agent to state $fAg \succsim g$ and $ g \succ_A f$.

\begin{theorem}\label{stickyrep} The following are equivalent:
\begin{itemize}
\item[(i)] The collection $\{\succsim_A\}_{A\in\Sigma}$ satisfies \nameref{CsEU} and \nameref{Dom-C};
\item[(ii)] The collection $\{\succsim_A\}_{A\in\Sigma}$ admits a conservative subjective expected utility representation.
\end{itemize}
Further, if  $(u,\mu,\delta)$ and $(u',\mu',\delta')$ both represent $\{\succsim_A\}_{A\in\Sigma}$, then
(i) $u'$ is a positive affine transformation of $u$, (ii) $\mu' = \mu$, and (iii) $\delta'(A)=\delta(A)$ for all $A$ such that $A,A^c \in \Sigma_+$.
\end{theorem}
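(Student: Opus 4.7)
The plan is to prove the two directions separately and then treat the uniqueness claims.

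\textbf{Easy direction (ii)\,$\Rightarrow$\,(i).} Under the representation, \nameref{CsEU} is automatic. For \nameref{Dom-C}, fix $A\in\Sigma_+$; since $\mu(A)>0$, the condition $fAg\succsim g$ is equivalent to $\mathbb{E}_{\B(\mu,A)}[u(f)-u(g)]\geq 0$, while $f\succsim g$ is $\mathbb{E}_\mu[u(f)-u(g)]\geq 0$. Because $\mu_A = \delta(A)\mu + (1-\delta(A))\B(\mu,A)$, the expected-utility difference under $\mu_A$ is a non-negative combination of these two, so non-negativity of both hypotheses yields $f\succsim_A g$, and strictness of both yields $f\succ_A g$.

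\textbf{Hard direction (i)\,$\Rightarrow$\,(ii).} By \nameref{CsEU}, for each $A$ fix $(u_A,\mu_A)$ representing $\succsim_A$; set $u:=u_S$ and $\mu:=\mu_S$. I first pin down a common utility: for any $A\in\Sigma_+$ and constants $x,y\in X$ with $u(x)>u(y)$, the acts $f\equiv x$, $g\equiv y$ satisfy $f\succ g$ and $xAy\succ y$ (using $\mu(A)>0$), so the strict clause of \nameref{Dom-C} gives $x\succ_A y$; running the same argument twice on the case $u(x)=u(y)$ forces indifference. Hence $\succsim_A$ and $\succsim$ agree on constants, and SEU uniqueness lets me normalize $u_A=u$. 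For the belief, fix $A\in\Sigma_+$ and claim: for every $w\in\mathbb{R}^S$, $\langle w,\mu\rangle\geq 0$ and $\langle w,\B(\mu,A)\rangle\geq 0$ imply $\langle w,\mu_A\rangle\geq 0$. Since $u$ is non-constant on the convex set $X$, its image contains a nondegenerate interval, so after rescaling $w$ by a positive constant (which preserves all sign inequalities) I can find $f,g\in\F$ with $u(f(s))-u(g(s))=w(s)$ pointwise; the two hypotheses then read $f\succsim g$ and $fAg\succsim g$, and \nameref{Dom-C} delivers $f\succsim_A g$, i.e.\ $\langle w,\mu_A\rangle\geq 0$. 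Farkas' lemma, applied to the cone generated by $\mu$ and $\B(\mu,A)$, then produces $\alpha,\beta\geq 0$ with $\mu_A=\alpha\mu+\beta\B(\mu,A)$; summing coordinates gives $\alpha+\beta=1$, and I set $\delta(A):=\alpha\in[0,1]$.

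\textbf{Uniqueness and main obstacle.} Uniqueness of $u$ (up to positive affine transformation) and of $\mu$ is standard SEU uniqueness applied to $\succsim$. For $\delta(A)$: when both $A$ and $A^c$ are non-null, $\mu$ and $\B(\mu,A)$ are distinct ($\B(\mu,A)$ vanishes on $A^c$ while $\mu$ does not), so they are affinely independent and the convex decomposition of $\mu_A$ between them is unique. The main obstacle is the belief step of the hard direction: one must confirm that the richness of $X$ permits realizing arbitrary sign patterns on $S$ as utility differences, and then correctly invoke duality (Farkas) to convert the behavioral dual-inequality implication into the required conic — hence, after normalizing total mass, convex — representation of $\mu_A$.
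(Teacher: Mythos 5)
Your proof is correct and follows essentially the same route as the paper: pin down a common utility from constant-act consistency under \nameref{Dom-C}, identify the relation $fAg\succsim g$ with expected utility under $\B(\mu,A)$, and then use convex duality to force $\mu_A$ onto the segment between $\mu$ and $\B(\mu,A)$. The only cosmetic difference is that you phrase the last step as a dual-cone inclusion plus Farkas' lemma, whereas the paper runs the equivalent separating-hyperplane argument by contradiction; the uniqueness discussion also matches.
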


Theorem 1 shows that when conditional preferences admit a SEU representation, \nameref{Dom-C} is the precise behavioral content of conservative updating. Further, the degree of conservatism, $\delta(A)$, is uniquely pinned down at essentially every $A$.\footnote{\nameref{Dom-C} may be extended to any event $A \in \Sigma$, rather than merely the non-null events, $\Sigma_+$. When $A$ is null, $fAg \sim g$ for any $f,g$. It then follows from \nameref{Dom-C} that $\succsim=\succsim_A$. Thus, conservative agents react to null events by completely ignoring them.}

\begin{example}[continues=ConfBias] To further illustrate this result, recall our example with two payoff states $\mathbb{P}=\{R,B\}$ and two signals $\Theta=\{r, b\}$.  As illustrated in \autoref{Indifference}, \nameref{Dom-C} ensures that the lower contour set of the conditional indifference curve passing through $f$ must contain the intersection of the lower contour sets determined by (i) the initial preference (e.g., $\mu$) and (ii) the preference that corresponds to performing Bayesian updating $\B(\mu,\mathbf{r})$. This intersection is shaded in blue. In essence, conservatism pulls the conditional indifference curve to to be more aligned with the initial preference. 
 	
	\begin{figure}[h]
	\centering
          \includegraphics[height=7cm]{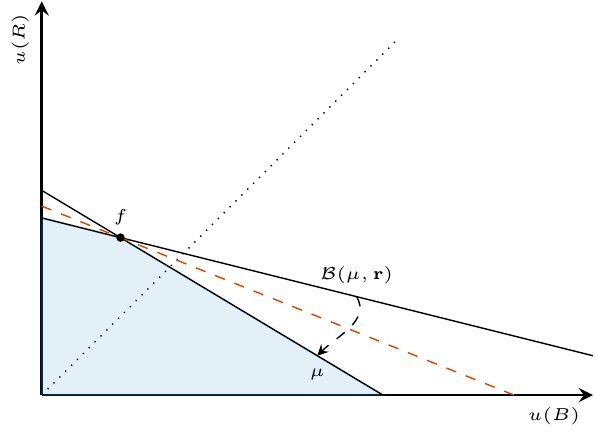}
           \caption{Indifference curves after the agent receives an $r$ signal. The (orange) dashed line corresponds to an indifference curve passing through $f$ for some $\delta(\mathbf{r}) \in (0,1)$.}\label{Indifference}
          \end{figure}
\end{example}

\begin{remark} It is easy to see from \autoref{Indifference} that over-inference or ``prior-neglect'' may be captured by rotating the (orange) indifference curve away from $\mu$ and beyond $\B(\mu,\mathbf{r})$ (e.g., intuitively, this is captured by $\delta(\mathbf{r}) < 0$). This would imply the existence of an act $g$,  such that $f \succsim g$, $f\mathbf{r}g \succsim g$ but $g \succ_{\mathbf{r}} f$, violating  \nameref{Dom-C}.  However, over-inference is difficult to capture in the conditional preference framework with a general state space. This is because $\delta(A)<0$ results in negative values for the probabilities of certain states. Thus, the behavioral foundations of such behavior remain an open question.  
\end{remark}

\section{Source Dependence and Belief Biases}\label{source}

When beliefs (or attitudes) about uncertainty vary with the source of that uncertainty, then beliefs (or attitudes) are said to be source-dependent. This section demonstrates that source-dependent conservatism captures belief dynamics consistent with confirmation bias, representativeness, and the good news-bad news effect.  Source dependence might be related to familiarity with a source. An implication of this notion is that there may be increased skepticism about new sources, which is conceptually similar to the notion of source-dependent ambiguity attitudes (\cite{Abdellaoui2011} and \cite{chew2008}).  Alternatively, source dependence might be related to message complexity, whereby subjectively simpler messages are accepted more readily.\footnote{This reasoning is consistent with theories in psychology that posit that conservatism results from the difficulty of aggregating sources of information \citep{slovic1971} and noisy recollection \citep{Hilbert2012}.}

 \subsection{Confirmation Bias}
 Confirmation bias (see \citet{nickerson1998} for an excellent review) refers to a tendency to accept information that supports already believed hypotheses and to downplay conflicting information.  Confirmation bias emerges from conservatism bias when the weight attached to the prior is greater for ``disconfirming" news than for ``confirming" news.

\begin{example}[continues=ConfBias]  There are two payoff states, $\mathbb{P}=\{R,B\}$, and two signals, $\Theta=\{r, b\}$, and the prior $\mu$ is given in \autoref{ex1} below. 

\begin{table}[h]
\centering
\begin{tabular}{ | c | c | c |}
\hline $\mu$ & $r$ & $b$ \\
\hline $R$  & $4/8$   & $1/8$   \\
\hline  $B$ & $1/8$   & $2/8$ \\  \hline
\end{tabular}
\end{table}

If the agent admits a conservative SEU representation, posterior beliefs after $\mathbf{r}$ and $\mathbf{b}$, written in terms of the marginal belief on $R$, are
\[\mu_{\mathbf{r}}(R) = \delta(\mathbf{r})\frac{5}{8} + (1-\delta(\mathbf{r}))\frac{4}{5}\]
\[\mu_{\mathbf{b}}(R) = \delta(\mathbf{b})\frac{5}{8} + (1-\delta(\mathbf{b}))\frac{1}{3}.\]

When $ \delta(\mathbf{r}) <  \delta(\mathbf{b})$, the agent's beliefs exhibit confirmation bias in addition to conservatism; when she receives a signal that is counter to her ``prior hypothesis'' (i.e., $\mu(R) > \frac12 > \mu(B)$), she is more reluctant to incorporate ``disconfirming information." I provide a behavioral characterization of confirmation bias in \autoref{GCBprop}. 
\end{example}

  \subsection{Representativeness} 
  
 The representativeness heuristic \citep{Kahneman1972} refers to a tendency to react more strongly to signals that are representative of, or similar to, an underlying state. Consider a setup like the confirmation bias example, but suppose the agent has a similarity relation $\unrhd$ where $(A,a) \unrhd (B,b)$ denotes \emph{$a$ is more representative of $A$ than $b$ is of $B$}.  If $(A,a) \unrhd (B,b)$ implies $\delta(E_a) \leq \delta(E_b),$ then a form of the representativeness heuristic is present. Beliefs depend on both the objective information provided by the signal and the degree to which the agent perceives that signal as representative of the payoff-relevant variables.

  \subsection{Good News-Bad News Effect} 

When subjects react differently to ``types'' of news, they exhibit the good news-bad news effect (\cite{Eil2011}, \cite{mobius2014}, and \cite{charness2017}).  Source-dependent conservatism may also allow for this effect. Informally, a decision maker reacts more to good news when her degree of bias is smaller.  Similar to the representativeness example, suppose $E_a \unrhd E_b$ denotes \emph{$E_a$ is better news than $E_b$}. Then asymmetric updating occurs when $E_a \unrhd E_b$ implies $\delta(E_a) \leq \delta(E_b)$. The reverse behavior, stronger reaction to bad news, is captured by reversing the inequality.

\section{Degrees of Conservatism}\label{discussion} 

\subsection{A Conservatism Index}

\autoref{stickyrep} allows for the degree of conservatism to depend on the information received. In many settings, this is useful as it allows for source-dependent reactions to news. However, it is often convenient to fully describe behavior with a single parameter, or an index of conservatism. Further, a constant degree of conservatism simplifies the task of identifying behavioral parameters from data and increases the predictive power of the model. I show that constant conservatism is characterized by a consistency condition linking conditional preferences across information sets that may be viewed as a weak form of \nameref{C}.

\begin{axiom}[Weak Consequentialism]\label{WC}  For any $A, B, C \in \Sigma_+$ with $C \cap (A\cup B) = \varnothing$ and for all $f,y,z \in \F$, \[fCy\succsim_A z \Longleftrightarrow fCy\succsim_B z.\]
\end{axiom}

To see how this is a weak form of \nameref{C}, suppose $C\cap (A\cup B)= \varnothing$ and that \nameref{C} holds. Consider $fCy$ for an arbitrary $f$.  Since for all $s \in A\cup B$, $fCy(s)=fCy(s)$, it follows from \nameref{C} that both $fCy  \sim_{A} y$ and $fCy  \sim_{B} y$. Thus, under \nameref{C}, $f$ is irrelevant, and under ordinal preference consistency, \nameref{WC} always holds.  On the other hand, \nameref{WC} does not impose indifference between $fCy$ and $y$ but requires a consistent relative preference; if $fCy$ is preferred to $z$ after $A$, then it is also preferred after $B$.

\begin{proposition}\label{constant} Suppose the collection $\{\succsim_A\}_{A\in\Sigma}$ admits a Conservative SEU representation $(u,\mu,\delta)$. Then the following are equivalent:
\begin{itemize}
\item[(i)] The collection $\{\succsim_A\}_{A\in\Sigma}$ satisfies \nameref{WC}.
\item[(ii)] There is a unique $\delta \in [0,1]$ such that $\delta(A)=\delta$ for all $A$ such that $A$ and $S\setminus A$ are non-null. 
\end{itemize}
\end{proposition}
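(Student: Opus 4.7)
The plan is to prove (i) $\iff$ (ii) by direct calculation with the conservative SEU representation from \autoref{stickyrep}. Both directions rely on a single algebraic identity. Writing $V_\nu[h]=\sum_s u(h(s))\nu(s)$ and decomposing $\mu_A=\delta(A)\mu+(1-\delta(A))\B(\mu,A)$, the hypothesis $C\cap A=\varnothing$ forces $\B(\mu,A)$ to place no mass on $C$, so $fCy$ and $y$ coincide $\B(\mu,A)$-almost surely. Consequently
\[V_{\mu_A}[fCy]-V_{\mu_A}[z]=\delta(A)\bigl(V_\mu[fCy]-V_\mu[z]\bigr)+(1-\delta(A))\bigl(V_{\B(\mu,A)}[y]-V_{\B(\mu,A)}[z]\bigr),\]
with the analogous identity under $B$. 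This identity isolates the two event-dependent quantities, $\delta(A)$ and $V_{\B(\mu,A)}[\cdot]$, which drive both directions.

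For (ii) $\Rightarrow$ (i), I would observe that under a constant $\delta$ the first summand on the right-hand side is independent of the conditioning event, and for the relevant test class (in particular for constant acts $y$ and $z$, where $V_{\B(\mu,A)}[y]-V_{\B(\mu,A)}[z]=u(a)-u(b)$ is a scalar independent of $A$), the second summand is also event-independent. The sign of the difference under $\succsim_A$ then matches that under $\succsim_B$, yielding \nameref{WC}.

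For (i) $\Rightarrow$ (ii), I would fix $A,B\in\Sigma_+$ with non-null complements and select a non-null $C\subseteq(A\cup B)^c$. Specializing to constant acts $y\equiv a$, $z\equiv b$ and varying $f$ on $C$, the identity reduces $fCa\succsim_A b$ to the scalar inequality $\delta(A)\,\lambda(f)\geq u(b)-u(a)$, where $\lambda(f)=\sum_{s\in C}\mu(s)\bigl(u(f(s))-u(a)\bigr)$ ranges freely through $\R$ as $f$ varies (using that $u$ is non-constant), and $u(b)-u(a)$ can be chosen freely in the image of differences of $u$. The parallel reduction under $B$ substitutes $\delta(B)$ for $\delta(A)$. \nameref{WC} insists these two threshold inequalities be equivalent for every admissible pair $(\lambda,\tau)$, and an elementary case analysis shows this forces $\delta(A)=\delta(B)$. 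Uniqueness of the common $\delta$ is then inherited from the essentially unique identification of $\delta(\cdot)$ in \autoref{stickyrep}.

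The main obstacle is the edge case where $(A\cup B)^c$ contains no non-null event, so no admissible $C$ can be chosen directly. I would address this by chaining through an intermediate non-null event $A'$ with non-null complement, selected so that both $(A\cup A')^c$ and $(A'\cup B)^c$ contain non-null events; two applications of the preceding argument then give $\delta(A)=\delta(A')=\delta(B)$. Verifying that such an $A'$ always exists when the algebra is rich enough (i.e., outside degenerate cases like $|S|=2$) is the delicate bookkeeping step.
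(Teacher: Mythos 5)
Your proof is correct and follows essentially the same route as the paper: the key identity $\mu_A(C)=\delta(A)\mu(C)$ for non-null $C$ disjoint from $A$ (equivalently, your threshold inequality $\delta(A)\lambda(f)\ge u(b)-u(a)$), together with chaining through an intermediate non-null event when $(A\cup B)^c$ carries no mass, is exactly the paper's Case 1/Case 2 structure, with the paper resolving your ``delicate bookkeeping'' by splitting on whether $A\cap B$ is empty and invoking the existence of a non-null proper subset. One caveat: your (ii)$\Rightarrow$(i) direction verifies \nameref{WC} only for constant $y,z$, which matches the paper's evident intent (its own proof treats $x,y,z$ as outcomes), but under the axiom's literal quantification over all $y,z\in\F$ the equivalence can fail even with constant $\delta$, since $V_{\B(\mu,A)}[y]-V_{\B(\mu,A)}[z]$ is then genuinely event-dependent — so the restriction to constants is not merely a convenient test class but essential to the claim.
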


A key strength of this result is that now $\delta$ may serve as a simple index of conservatism bias and may be elicited with only a few questions. 

\subsection{Generalized Confirmation Bias}\label{gcbsection}

 By weakening \nameref{WC} to depend on the ex-ante likelihood of events, a behavioral	characterization of conservative preferences that are consistent with confirmation bias may be obtained.  To do so, I first define a qualitative likelihood ordering over events.
 
 \begin{definition}For any $A,B \in \Sigma_+$, say that $A$ is \textbf{more likely than} $B$, denoted $A \ge_l B$ if for all $x,y \in X$, $x \succsim y$ implies $xAy \succsim xBy$. 
 \end{definition}
 
Recall that \nameref{WC} ensures a constant $\delta$ by precisely calibrating the agent's willingness to bet on $C$ across $A$ and $B$. Note that this is independent of the ex-ante relative likelihood of $A$ or $B$. Confirmation bias, on the other hand, suggests that subjectively more likely events are incorporated more accurately. In other words, because $B$ is viewed as less likely than $A$, the agent is more willing to bet on $C$ after $B$ than after $A$. This is captured by the following behavioral axiom. 
 
 \begin{axiom}[Generalized Confirmation Bias]\label{GCB}  For any $A, B, C \in \Sigma_+$ with $C \cap (A\cup B) = \varnothing$ and for all $x,y,z \in \F$ if $x\succ y$ and $A \ge_l B$, then  \[ xCy\succsim_B z \implies xCy\succsim_A z.\]
\end{axiom}
  
 \begin{proposition}\label{GCBprop} Suppose the collection $\{\succsim_A\}_{A\in\Sigma}$ admits a Conservative SEU representation $(u,\mu,\delta)$. Then the following are equivalent:
\begin{itemize}
\item[(i)] The collection $\{\succsim_A\}_{A\in\Sigma}$ satisfies \nameref{GCB}.
\item[(ii)] $\mu(A) \ge \mu(B)$ if and only if $\delta(A) \leq \delta(B)$.  
\end{itemize}
 \end{proposition}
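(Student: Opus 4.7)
The plan is to invoke the Conservative SEU representation of \autoref{stickyrep} and translate \nameref{GCB} into a direct ordering on $\delta$. The key algebraic simplification is that $C \cap A = C \cap B = \varnothing$ renders the Bayesian tail trivial on $C$: $\B(\mu,A)(C) = \B(\mu,B)(C) = 0$, so
\[
\mu_A(C) = \delta(A)\mu(C), \qquad \mu_B(C) = \delta(B)\mu(C).
\]
Because $xCy$ coincides with $y$ on $A$ and on $B$, for constant outcomes $x \succ y$ the expected utilities collapse to
\[
V_{xCy}^A = u(y) + \delta(A)\mu(C)\bigl(u(x)-u(y)\bigr), \qquad V_{xCy}^B = u(y) + \delta(B)\mu(C)\bigl(u(x)-u(y)\bigr).
\]

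For (i) $\Rightarrow$ (ii), I would run a calibration argument. Fix $A, B$ with $A \geq_l B$ (equivalently $\mu(A) \geq \mu(B)$) and a non-null $C \subseteq (A \cup B)^c$. Pick constants $x \succ y$; by affinity of $u$ on the convex set $X$, choose a constant outcome $w$ with $u(w) = V_{xCy}^B$, so that $xCy \sim_B w$. Applying \nameref{GCB} produces a comparison between $V_{xCy}^A$ and $V_{xCy}^B = u(w)$, which by the closed form above collapses to the required inequality between $\delta(A)$ and $\delta(B)$. Swapping $A$ and $B$ (legitimate whenever $B \geq_l A$ as well) handles the reverse likelihood ordering; together these give the biconditional, with $\mu(A) = \mu(B)$ forcing $\delta(A) = \delta(B)$.

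For (ii) $\Rightarrow$ (i), I would assume the parameter equivalence and verify the axiomatic implication directly. Using the decomposition $\mu_A = \delta(A)\mu + (1-\delta(A))\B(\mu,A)$ together with the identity $xCy \equiv y$ on $A$ and on $B$, the quantities $V_{xCy}^A - V_z^A$ and $V_{xCy}^B - V_z^B$ split into contributions from the prior $\mu$ and from the Bayesian posteriors; the hypothesis on $\delta$ then forces the EU comparison to transfer across the two conditioning events in the direction demanded by the axiom.

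The main obstacle is an existence issue: the calibration argument requires a non-null $C \subseteq (A \cup B)^c$, which need not exist if $A \cup B$ exhausts the non-null states. In that case \nameref{GCB} imposes no direct restriction on $\delta(A)$ versus $\delta(B)$. I would handle this by first establishing the inequality on pairs $(A,B)$ that admit such a $C$, then transferring the conclusion to the remaining pairs via auxiliary events of common prior mass, relying on the uniqueness of $\delta$ on essentially all events guaranteed by \autoref{stickyrep}.
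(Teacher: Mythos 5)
Your proposal follows essentially the same route as the paper's proof: calibrate $z$ (your $w$) so that $xCy \sim_B z$, apply \nameref{GCB} to compare the conditional values of $xCy$ under $A$ and $B$, and exploit $\B(\mu,A)(C)=\B(\mu,B)(C)=0$ so that $\mu_E(C)=\delta(E)\mu(C)$ collapses the comparison to an inequality between $\delta(A)$ and $\delta(B)$, with the converse verified directly from the representation. The one place you go beyond the paper is the existence of a non-null $C\subseteq (A\cup B)^c$: the paper's proof simply fixes such a $C$ without comment, whereas your patch via auxiliary events of common prior mass addresses the case $\mu(A\cup B)=1$ in the spirit of Case~2 of the proof of \autoref{constant}, which is a genuine (if minor) improvement rather than a divergence in method.
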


\subsection{Comparative Conservatism}

Intuitively, a more conservative agent is less responsive to information, which can be captured in the representation by a larger weight on the prior belief. Analogously, one person is more conservative than another if he places a larger weight on his prior belief than she does on her prior belief.  This can be formally defined in terms of preferences over binary acts.

\begin{definition}\label{CompareDef}Say that $\{\succsim^1_A\}_{A\in\Sigma}$ is \textbf{more conservative} than $\{\succsim^2_A\}_{A\in\Sigma}$ if for all $A$ and all $x,y_1,y_2 \in X$ satisfying (i) $x \succ^i y_i$ and (ii) $xAy_1 \succsim^1 z \LRA xAy_2 \succsim^2 z$ for all $z \in X$  

\[xAy_1 \succsim_A^1 z \implies xAy_2 \succsim_A^2 z.\]
\end{definition}

If both agents are Bayesian, then $xAy \sim^i_A x$ for any $x,y \in X$. A conservative agent, however, may worry about the low payoff $y$ on $A^c$. The more conservative the agent, the lower his certainty equivalent for a bet on $A$. Consequently, a more conservative agent places a lower value on $xAy$ than a less conservative agent. Hence, when agent $1$ (he) is more conservative than agent $2$ (she), his certainty equivalent is lower than hers, and so whenever he prefers to bet on $A$, so must she.  Importantly, this definition dos not require the agents to have the same initial beliefs, but only the same tastes over constant acts.\footnote{By allowing $y_1$ and $y_2$ to differ, we account for the different prior beliefs about the \emph{ex-ante} likelihood of $A$. Accordingly, we may take $y_1=y_2$ if and only if $\mu_1(A)=\mu_2(A)$.}

\begin{proposition}\label{comparative} Suppose $(\{\succsim^i_A\}_{A\in\Sigma})_{i=1,2}$ admit Conservative SEU representations $(u_i,\mu_i,\delta_i)_{i=1,2}$ where $u_1\approx u_2$ and $\Sigma^1_+=\Sigma^2_+$. Then the following are equivalent:
\begin{itemize}
\item[(i)] $\{\succsim^1_A\}_{A\in\Sigma}$ is more conservative than $\{\succsim^2_A\}_{A\in\Sigma}$.
\item[(ii)] $\delta_1(A) \geq \delta_2(A)$ for every $A$ such that $A$ and $S\setminus A$ are non-null. 
\end{itemize}
\end{proposition}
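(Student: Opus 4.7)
The plan is to reduce the comparison to a one–dimensional calculation about the conditional value of a bet on $A$. Normalize so that $u_1 = u_2 = u$ (using $u_1 \approx u_2$ and the affine-invariance part of \autoref{stickyrep}). Fix $A$ with $A, A^c \in \Sigma_+^1 = \Sigma_+^2$, and write $\pi_i = \mu_i(A)$, $\alpha_i = \delta_i(A)$. For $x, y_i \in X$, let $V_i = \pi_i u(x) + (1-\pi_i) u(y_i)$ and $W_i = \mu_{i,A}(A) u(x) + (1-\mu_{i,A}(A)) u(y_i)$. A direct substitution using $\mu_{i,A}(A) = 1 - \alpha_i(1-\pi_i)$ yields the key identity
\[ W_i = u(x) - \alpha_i\bigl(u(x) - V_i\bigr). \]

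I would next translate the two conditions in \autoref{CompareDef}. Condition (i) $x\succ^i y_i$ is just $u(x) > u(y_i)$. Condition (ii) says $xAy_1 \succsim^1 z \LRA xAy_2 \succsim^2 z$ for every constant act $z\in X$; under the ex-ante SEU representation this is $V_1 \ge u(z) \LRA V_2 \ge u(z)$ for all $z\in X$. Because $u$ is affine and $X$ is convex, $u(X)$ is an interval, and since $V_i \in u(X)$ (as a convex combination of $u(x)$ and $u(y_i)$), this biconditional forces $V_1 = V_2 =: V$. Moreover $u(x) > V$. The conclusion in \autoref{CompareDef}, quantified over all $z\in X$, becomes $W_1 \le W_2$ (by the same interval argument). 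Via the boxed identity, with $u(x) - V > 0$, this is equivalent to $\alpha_1 \ge \alpha_2$.

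The direction (ii)$\implies$(i) is then immediate: assuming $\delta_1(A) \ge \delta_2(A)$, any triple $(x, y_1, y_2)$ satisfying the two hypotheses of the definition gives $V_1 = V_2$ and hence $W_1 \le W_2$, which delivers the required implication $xAy_1 \succsim^1_A z \implies xAy_2 \succsim^2_A z$ for all $z$.

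For (i)$\implies$(ii), fix $A$ with $A, A^c \in \Sigma_+$; I need to exhibit at least one triple $(x, y_1, y_2)$ satisfying (i) and (ii), so that the more-conservative hypothesis applied to it yields $W_1 \le W_2$ and hence $\alpha_1 \ge \alpha_2$. This existence step is the main (minor) obstacle. Take any $x, y \in X$ with $u(x) > u(y)$; set $y_1 = y$ (so $\pi_1 \le \pi_2$ case, else swap roles). One wants $y_2 \in X$ with
\[ u(y_2) = \frac{(1-\pi_1) u(y_1) + (\pi_1 - \pi_2) u(x)}{1-\pi_2}. \]
An elementary check shows this value lies strictly between $u(y_1)$ and $u(x)$, so by convexity of $X$ and affineness of $u$ such a $y_2$ exists in $X$. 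Both conditions (i) and (ii) of \autoref{CompareDef} then hold by construction, and applying the more-conservative hypothesis finishes the argument. This establishes the equivalence for each $A$ such that $A$ and $S\setminus A$ are non-null.
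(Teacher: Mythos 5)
Your argument is correct and follows essentially the same route as the paper: translate conditions (i)--(ii) of \autoref{CompareDef} into the equality of ex-ante expected utilities $V_1=V_2$ and the conclusion into the inequality $W_1\le W_2$ of conditional expected utilities, then solve for the $\delta$'s. Your identity $W_i=u(x)-\delta_i(A)\bigl(u(x)-V_i\bigr)$ lets you treat the cases $\mu_1(A)=\mu_2(A)$ and $\mu_1(A)\neq\mu_2(A)$ uniformly, which is tidier than the paper's two-case algebra, and your explicit existence step for a qualifying triple $(x,y_1,y_2)$ in (i)$\implies$(ii) fills a point the paper leaves implicit. One small slip there: with $y_1=y$ fixed, the constructed $u(y_2)$ is a convex combination of $u(x)$ and $u(y_1)$ (hence lies in $u(X)$ and strictly below $u(x)$) precisely when $\mu_1(A)\ge\mu_2(A)$, not $\mu_1(A)\le\mu_2(A)$; since you already allow swapping the roles of the two agents, this is only a relabeling.
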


\autoref{comparative} may be useful when attempting to classify subjects based on their degree of conservatism because it shows that subjects may be compared with relatively few questions. Given an event of interest, the experimenter need only elicit (conditional) certainty equivalents for particular binary acts (on $A$). Under the assumption of constant conservatism (i.e., \nameref{WC} holds), a single elicitation of a (conditional) certainty equivalent suffices to order all subjects.

\section{Multiple Beliefs}\label{multibeliefs}

Agents may struggle to come up with a single, probabilistic belief; they perceive a situation to be ambiguous.\footnote{See \cite{Gilboa2011} for an excellent summary of the literature.} Under ambiguity, we often suppose the agent has multiple beliefs.  To study conservatism with multiple beliefs, I suppose the agent has a collection of (incomplete) preferences, each of which admits a multiple-prior ``unanimity'' representation \`a la \cite{Bewley2002}. 

Similar settings have been used to study objective versus subjective rationality \citep{gilboa2010}, distinguish indecisiveness in tastes versus beliefs \citep{Ok2012}, and differentiate ambiguity perception and attitude \citep{Ghirardato2004}.  Unlike in \cite{gilboa2010}, I am focused purely on the evolution of beliefs and so do not consider the secondary ``subjective'' preference relation. Because of this my results hold irrespective of the agent's ambiguity attitude. 

\begin{definition} 
A preference relation $\succsim$ admits a multi-prior expected utility representation if  there are a utility $u:X \ra \mathbb{R}$ and a nonempty, closed, convex set of beliefs $\mathcal{M} \subseteq \Delta(S)$ such that for all acts $f,g \in \F$, 
\begin{equation}\label{mpeu}f \succsim g  \Longleftrightarrow \sum_{s \in S}u(f(s))\mu(s) \geq \sum_{s \in S}u(g(s))\mu(s) \text{ for every } \mu \in \mathcal{M}.\end{equation}
\end{definition}

 Suppose the agent has a collection of possibly incomplete preferences $\{\succsim_A^*\}_{A \in \Sigma}$ defined over $\F$.  For an event $A$, say that $A$ is unambiguously non-null if for all $x,y$ such that $x \succ^* y$, $xAy \succ^* y$.  Let $\Sigma_+^*$ denote the set of unambiguously non-null events. For any set of probabilities $\mathcal{M}$ and any $A \in \Sigma^*_+$, the set obtained when each belief in $\mathcal{M}$ is updated by Bayes' rule is denoted $\B(\mathcal{M},A)=\{\pi \in \Delta(S) \mid \pi = \B(\mu,A) \text{ for some }\mu \in \mathcal{M} \}$. Lastly, for any set $Y \subseteq \Delta(S)$, let $conv(Y)$ denote the convex hull of $Y$. 
 
\begin{axiom}[Conditional Multi-prior Expected Utility]\label{wbp} For each $A \in \Sigma^*_+$, $ \succsim_A^*$ admits a nondegenerate multi-prior expected utility representation. That is, there exists a pair $(u_A,\mathcal{M}_A)$ that represents $\succsim_A^*$ as in \autoref{mpeu}.
\end{axiom}
As in the case of subjective expected utility, the conditions for such a representation are well-known in the literature and so are not re-stated.

The following axiom, \nameref{WUC},  is precisely \nameref{Dom-C} applied to the the unambiguous preferences. 

\begin{axiom}[Unambiguous Dynamic Conservatism]\label{WUC} For any $A \in \Sigma_+^*$, and for all $f,g \in \F$
\[\begin{rcases}
 f \succsim^* g \\
 fAg \succsim^* g 
\end{rcases} \implies f \succsim_{A}^* g.\]
Further, if both (i) and (ii) are strict, then $f \succ_A^* g$. 
\end{axiom}

The following theorem is the direct counterpart of \autoref{stickyrep} for multiple priors.

\begin{theorem}\label{ambiguityrep}
The following are equivalent:
\begin{itemize}
\item[(i)]The collection $\{\succsim_A^*,\}_{A \in \Sigma}$ satisfies \nameref{wbp} and \nameref{WUC}.
\item[(ii)] There is a non-constant utility function $u:X \rightarrow \mathbb{R}$ such that $u_A=u $ for every $A \in \Sigma$, and for each $A \in \Sigma_+^*$,
\begin{equation}\label{Multi-conserv}\mathcal{M}_A \subseteq conv\left( \mathcal{M} \cup \B(\mathcal{M},A)\right).\end{equation}
In this case, we say the agent admits a \textbf{conservative multi-prior representation}. 
\end{itemize}

\end{theorem}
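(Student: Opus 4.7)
The plan is to mirror the proof of Theorem~\ref{stickyrep}, replacing the direct convex-combination computation with a separating hyperplane. Two things require attention in the forward direction $(i)\Rightarrow(ii)$: establishing a common utility index across information sets, and establishing the set containment $\mathcal{M}_A \subseteq conv(\mathcal{M} \cup \B(\mathcal{M},A))$.

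First I would treat the common utility. For constant acts $x,y \in X$ with $x \succsim^* y$ one has $u(x) \ge u(y)$, so $(u(x)-u(y))\mu(A) \ge 0$ for every $\mu \in \mathcal{M}$, which delivers $xAy \succsim^* y$ for free. Hence \nameref{WUC} forces $x \succsim_A^* y$, and swapping the roles of $x,y$ shows that $\succsim^*$ and $\succsim_A^*$ induce the same (complete) ranking on $X$. Since the multi-prior representation identifies its utility uniquely up to positive affine transformation, I may rescale so that $u_A = u$ for every $A$.

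Next I would establish the containment by contradiction and separation. A preliminary observation: unambiguous non-nullity of $A$ forces $\mu(A) > 0$ for every $\mu \in \mathcal{M}$, since otherwise choosing $x \succ^* y$ would leave $xAy$ and $y$ indistinguishable under that prior. A direct calculation then gives, for any acts $f,g$,
\[ fAg \succsim^* g \iff \sum_s (u(f(s)) - u(g(s)))\,\nu(s) \ge 0 \text{ for every } \nu \in \B(\mathcal{M},A). \]
Suppose $\lambda^* \in \mathcal{M}_A \setminus K$, where $K = conv(\mathcal{M} \cup \B(\mathcal{M},A))$. Since $S$ is finite and $K$ is compact and convex, the separation theorem in $\R^{|S|}$ supplies $\phi : S \to \R$ and $c \in \R$ with $\phi\cdot\lambda^* < c \le \phi \cdot \nu$ for all $\nu \in K$; subtracting $c$ from each coordinate of $\phi$ (which only shifts inner products with probabilities by $c$) I may assume $\phi \cdot \lambda^* < 0 \le \phi \cdot \nu$ on $K$. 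Because $X$ is a convex subset of a vector space on which $u$ is non-constant and mixture-linear, after rescaling $\phi$ I can realize it as a utility difference: picking reference prizes $x,y$ with $u(x) > u(y)$, I would define $f(s),g(s)$ coordinatewise as mixtures of $x$ and $y$ so that $u(f(s)) - u(g(s)) = \phi(s)$ for every $s$. Then $f \succsim^* g$ and $fAg \succsim^* g$, so \nameref{WUC} yields $f \succsim_A^* g$, whence $\phi \cdot \lambda^* \ge 0$, a contradiction.

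The converse $(ii)\Rightarrow(i)$ is a direct verification: any $\lambda \in \mathcal{M}_A$ decomposes as $\alpha\mu + (1-\alpha)\nu$ with $\mu \in \mathcal{M}$, $\nu \in \B(\mathcal{M},A)$, and $\alpha \in [0,1]$ (this is the standard form of a convex hull of two convex sets), and the premises $f \succsim^* g$ and $fAg \succsim^* g$ make the inner product of $u(f)-u(g)$ against $\mu$ and $\nu$ each nonnegative, so also against $\lambda$; strictness in both premises propagates to strictness at $\lambda$. The main obstacle is the separation-to-act reduction: one must rescale $\phi$ carefully so that the constructed $f,g$ actually take values in $X$ and simultaneously so that the sign structure of the separation is preserved. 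Once that routine step is in hand, the argument is the natural multi-prior analogue of the SEU duality underlying Theorem~\ref{stickyrep}.
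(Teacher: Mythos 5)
Your proposal is correct and follows essentially the same route as the paper's proof: establish a common utility index on constant acts via \nameref{WUC}, identify the relation $fAg \succsim^* g$ with unanimity over $\B(\mathcal{M},A)$, and then derive the containment $\mathcal{M}_A \subseteq conv(\mathcal{M}\cup\B(\mathcal{M},A))$ by separating a putative outside posterior from the compact convex hull and realizing the separating functional as a utility difference of acts built from mixtures. The only additions relative to the paper are your explicit verification of the converse direction (which the paper treats as immediate) and the preliminary remark that unambiguous non-nullity makes $\B(\mathcal{M},A)$ well defined; both are consistent with the paper's argument.
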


As before, \nameref{WUC} ensures that risk attitudes are unchanged by the information (i.e., $u$ is independent of $A$). However, \autoref{ambiguityrep} is considerably more general than \autoref{stickyrep}, which corresponds to the special case where $\succsim^*$ and $\succsim_A^*$ are complete. An important point of departure is that  \autoref{ambiguityrep} allows for an agent to \emph{expand} her set of beliefs. For instance, this occurs when $\succsim^*$ is complete but $\succsim_A^*$ is not. At the same time, my result complements results by \cite{Ghirardato2008} and \cite{Faro2019}, who found that imposing dynamic consistency on the unambiguous preference ensures prior-by-prior updating (e.g., $\mathcal{M}_A=\B(\mathcal{M},A)$).

\begin{example}[Single prior-multiple weights]\label{sm} Suppose $\mathcal{M}=\{\mu\}$ and for each $A \in \Sigma_+^*$ there exists a closed, convex set of weights $W_A \subset [0,1]$ such that 
\[\mathcal{M}_A=\{\pi \in \Delta(S) \mid \pi = \delta_A\mu + (1-\delta_A)\B(\mu,A)  \text{ and } \delta_A \in W_A\}.\]
In this case, the agent is SEU before information. However, because she is uncertain about the reliability of the information, her set beliefs expands. Her posterior beliefs are constructed via a set of ``confidence weights'' that she places on the news. In fact, it is straightforward to show that this example corresponds to precisely the case in which the initial preference is SEU.
\end{example}

\begin{corollary}Suppose $\{\succsim_A^*\}_{A \in \Sigma}$ satisfies \nameref{wbp} and \nameref{WUC} and that the initial preference, $\succsim^*$, is complete. Then for every $A \in \Sigma_+^*$ there exists a closed, convex set of weights $W_A \subset [0,1]$ such that 
\[\mathcal{M}_A=\{\pi \in \Delta(S) \mid \pi = \delta_A\mu + (1-\delta_A)\B(\mu,A)  \text{ and } \delta_A \in W_A\}.\]
\end{corollary}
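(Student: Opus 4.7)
The strategy is to reduce the general multi-prior representation from \autoref{ambiguityrep} to its simplest possible form by exploiting completeness, and then to translate the set inclusion $\mathcal{M}_A \subseteq conv(\mathcal{M} \cup \B(\mathcal{M},A))$ into a one-dimensional statement about weights.

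The first step is to show that if the initial preference $\succsim^*$ is complete and admits a multi-prior (Bewley-style) representation as in \nameref{wbp}, then the prior set must be a singleton $\mathcal{M}=\{\mu\}$. This is the standard observation that unanimity-style preferences are complete if and only if the set of priors has a unique element: if two distinct priors $\mu,\mu' \in \mathcal{M}$ existed, one could find an act $f$ with $\sum u(f)\,d\mu > 0 > \sum u(f)\,d\mu'$ (using non-constancy of $u$), so neither $f \succsim^* 0$ nor $0 \succsim^* f$ would hold. This lets me replace $\mathcal{M}$ with $\{\mu\}$ and therefore $\B(\mathcal{M},A)$ with $\{\B(\mu,A)\}$.

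Once $\mathcal{M}$ and $\B(\mathcal{M},A)$ are singletons, \autoref{ambiguityrep} yields
\[
\mathcal{M}_A \subseteq conv\bigl(\{\mu\} \cup \{\B(\mu,A)\}\bigr) = \bigl\{\lambda \mu + (1-\lambda)\B(\mu,A) : \lambda \in [0,1]\bigr\}.
\]
I then define
\[
W_A := \bigl\{\lambda \in [0,1] : \lambda \mu + (1-\lambda)\B(\mu,A) \in \mathcal{M}_A\bigr\},
\]
so by construction $\mathcal{M}_A = \{\delta_A\mu + (1-\delta_A)\B(\mu,A) : \delta_A \in W_A\}$, which is the desired form. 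The remaining verification is that $W_A$ is closed and convex. Closedness follows because $W_A$ is the preimage of the closed set $\mathcal{M}_A$ under the continuous affine map $\lambda \mapsto \lambda\mu + (1-\lambda)\B(\mu,A)$ restricted to $[0,1]$. Convexity follows from convexity of $\mathcal{M}_A$: for $\lambda_1,\lambda_2 \in W_A$ and $\alpha \in [0,1]$, the convex combination of the two corresponding measures lies in $\mathcal{M}_A$ and equals $[\alpha\lambda_1 + (1-\alpha)\lambda_2]\mu + [1-\alpha\lambda_1-(1-\alpha)\lambda_2]\B(\mu,A)$, so $\alpha\lambda_1 + (1-\alpha)\lambda_2 \in W_A$.

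The only mildly delicate point is the degenerate case $\mu = \B(\mu,A)$ (which occurs when $\mu(A)=1$, but then $A \in \Sigma_+^*$ still makes sense). In that case the affine map is constant, the segment collapses to $\{\mu\}$, and $\mathcal{M}_A$ must equal $\{\mu\}$ by the inclusion above; any closed convex $W_A \subseteq [0,1]$ (for instance $W_A=[0,1]$) satisfies the representation trivially. So the main substantive step is really the first one---showing completeness forces $\mathcal{M}$ to be a singleton---after which the rest is a routine transfer of the closed/convex structure from $\mathcal{M}_A$ to $W_A$ along an affine parametrization.
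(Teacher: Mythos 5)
Your proof is correct and follows the natural route the paper leaves implicit (the corollary is stated without proof, with the text calling it straightforward): completeness of the Bewley preference forces $\mathcal{M}=\{\mu\}$, Theorem 2 then confines $\mathcal{M}_A$ to the segment between $\mu$ and $\B(\mu,A)$, and pulling back along the affine parametrization yields a closed convex $W_A$. Your handling of the degenerate case $\mu(A)=1$ is a sensible extra precaution.
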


\begin{example}[Multiple priors-single weight]\label{ms} Suppose for some map $\delta:\Sigma_+^* \ra [0,1]$,
\[\mathcal{M}_A= \delta(A)\mathcal{M} + (1-\delta(A))\B(\mathcal{M},A).\]
In this case, the agent has multiple beliefs, but she has a well-defined opinion of the news source. Thus she uses a single parameter to shift her beliefs after each event $A$.\end{example}

\begin{example}[continues=ms]  Suppose there are two payoff states, $\mathbb{P}=\{R,B\}$, two signals $\Theta=\{r, b\}$, and $\mathcal{M}=conv(\{\mu,\mu'\})$, which are shown in \autoref{ex2} below. 

\begin{table}[h]

%\begin{subtable}{.5\linewidth}
\centering
\begin{tabular}{ | c | c | c |}
\hline $\mu$ & $r$ & $b$ \\
\hline $R$  & $4/10$   & $2/10$   \\
\hline  $B$ & $1/10$   & $3/10$ \\  \hline
\end{tabular}
\quad
\begin{tabular}{ | c | c | c |}
\hline $\mu'$ & $r$ & $b$ \\
\hline $R$  & $3/10$   & $3/10$   \\
\hline  $B$ & $2/10$   & $2/10$ \\  \hline
\end{tabular}
\caption{Priors $\mu$ and $\mu'$.}\label{ex2}
\end{table}

Both $\mu,\mu'$ generate the same marginals over $\mathbb{P}$: $\mu(R)=\mu'(R)=3/5$. The distinction between $\mu$ and $\mu'$ is in how they treat signals. Under $\mu$ signals are informative, while under $\mu'$ they are not.  If the agent admits a conservative multi-prior representation, posterior beliefs after $\mathbf{r}$ and $\mathbf{b}$, written in terms of the belief over payoff-states $\mathbb{P}$, are
\[\mathcal{M}_{\mathbf{r}} = conv\left(\left\{\left(\frac{4-\delta(\mathbf{r})}{5},\frac{1+\delta(\mathbf{r})}{5}\right),\left(\frac35, \frac25\right) \right\}\right)  \]
\[\mathcal{M}_{\mathbf{b}} = conv\left(\left\{\left(\frac{2+\delta(\mathbf{b})}{5},\frac{3-\delta(\mathbf{b})}{5}\right),\left(\frac35, \frac25\right) \right\}\right).\]

These belief sets are illustrated in \autoref{cons_multiple} in terms of the marginal belief on $R$. 
	\begin{figure}[h]
	\centering
          \includegraphics[width=12cm]{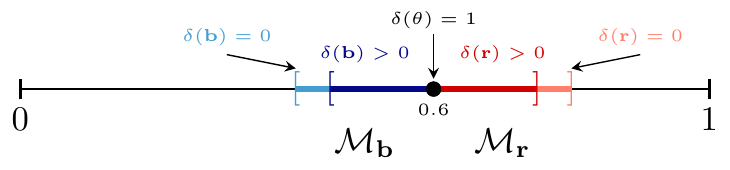}
           \caption{Posterior probabilities for payoff state $R$.}\label{cons_multiple}
          \end{figure}
Notice that $R$ is ambiguous after the signal, save for the extreme case $\delta(\theta)=1$, and the range is increasing as $\delta(\theta)$ decreases. Consequently, the conditional certainty equivalent for a bet on $R$ (e.g., a monetary act paying $\$x$ if $R$ and $\$0$ otherwise)   depends on the both the agent's degree of conservatism and her ambiguity attitude. In the case of ambiguity aversion (e.g., maxmin expected utility (MEU) representation \citep{Gilboa1989}), the agent may exhibit an ``all news is bad news'' effect. For more flexible models, such as the $\alpha$-maxmin model, this need not be the case.\footnote{By adapting axioms introduced by \cite{gilboa2010}, or more recently by \cite{Frick2020}, one may characterize an ($\alpha$-)maxmin representation consistent with $\{\succsim_A^*\}$. This additional structure on the agent's ambiguity attitude generates sharper predictions regarding her willingness to bet on various events.}  The certainty equivalents for a bet on $R$ can be seen in \autoref{ex3}. In the table, I vary $\delta(\cdot)$ and ambiguity attitude; $0$-maxmin (e.g., \cite{Gilboa1989}) is the most ambiguity averse and $1$-maxmin is maximally ambiguity seeking. Certainty equivalents for intermediate values of $\alpha$ may be directly calculated from these extreme cases. 

\begin{table}[h]

\centering
\begin{tabular}{ | c | c | c | c |}
\hline $\delta(\mathbf{r}) \backslash \alpha$ & $0$ &  $1$ \\
\hline $0$    & $0.6x$  &  $0.8x$   \\
\hline $1/2$ & $0.6x$  &  $0.7x$ \\
\hline $1$    & $0.6x$  &  $0.6x$ \\
\hline
\end{tabular}
\quad
\begin{tabular}{ | c | c | c | c |}
\hline $\delta(\mathbf{b}) \backslash \alpha$ & $0$ &  $1$ \\
\hline $0$     & $0.4x$  & $0.6x$ \\
\hline $1/2$  & $0.5x$  & $0.6x$\\
\hline $1$     & $0.6x$  & $0.6x$ \\
\hline
\end{tabular}
\caption{Certainty equivalents for the monetary bet paying $x$-utils in payoff state $R$, $0$ otherwise, for an $\alpha$-MEU agent. The left panel reports values after $\mathbf{r}$, while the right reports values after $\mathbf{b}$.}\label{ex3}
\end{table}

\end{example}

\appendix

\section{Proofs}

\subsection{Preliminary Results}

Consider the following two properties for a binary relation $\succsim$ on $\F$.
\begin{itemize} 
\item[]{\bf C-Completeness:} For any $x,y \in \F$, either $x \succsim y$ of $y \succsim x$. 
\item[]{\bf Monotonicity:} If $f(s)\succsim g(s)$ for all $s \in S$, then $f \succsim g$.
\end{itemize}

\begin{lemma}\label{OPClemma}
Consider a collection of preferences $\{\succsim_A\}_{A\in\Sigma}$ such that (i) $\succsim$ satisfies C-Completeness and Monotonicity and (ii) the collection satisfies \nameref{Dom-C}. Then for all $A \in \Sigma$ such that $A$ is non-null, and any $x,y \in X$, $x \succsim y \Longleftrightarrow x \succsim_A y.$
\end{lemma}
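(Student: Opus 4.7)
The statement asserts that preferences over constant acts are preserved across conditioning on any non-null event. I will prove the two implications separately.

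\emph{Forward direction} ($x\succsim y \Longrightarrow x\succsim_A y$). Since $x,y$ are constants and $xAy(s)\in\{x,y\}$ at every state, with both $x$ and $y$ weakly preferred to $y$, Monotonicity yields $xAy\succsim y$. The premises of \nameref{Dom-C} applied with $f=x$, $g=y$ are therefore met, delivering $x\succsim_A y$.

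\emph{Reverse direction} ($x\succsim_A y \Longrightarrow x\succsim y$). I proceed by contradiction. Suppose $x\succsim_A y$ but $\neg(x\succsim y)$; by C-Completeness of $\succsim$ this forces $y\succ x$. Applying the forward direction with the roles of $x$ and $y$ swapped gives $y\succsim_A x$, hence $x\sim_A y$. To produce a contradiction I plan to strengthen this to $y\succ_A x$ by invoking the strict clause of \nameref{Dom-C}. That clause demands both of its premises be strict: $y\succ x$ is strict, and Monotonicity already supplies $yAx\succsim x$, so only the sub-claim $yAx\succ x$ is missing.

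\emph{Main obstacle.} The delicate step is the auxiliary claim: if $A$ is non-null and $y\succ x$, then $yAx\succ x$. I plan to argue this by contradiction as well. Assume $yAx\sim x$ while $y\succ x$. Non-nullness of $A$ supplies some $f_0,g_0\in\F$ with $f_0Ag_0\not\sim g_0$; by C-Completeness of $\succsim$ I may assume $f_0Ag_0\succ g_0$, and Monotonicity then locates some $s^*\in A$ with $f_0(s^*)\succ g_0(s^*)$. The goal is to combine this pointwise strict preference with the assumed indifference $yAx\sim x$ and repeated applications of Monotonicity and \nameref{Dom-C} to derive that $fAg\sim g$ for \emph{every} $f,g\in\F$, i.e.\ that $A$ is null, contradicting the hypothesis. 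This is the step I expect to require the most care; once it is secured, the strict clause of \nameref{Dom-C} yields $y\succ_A x$, which is incompatible with $x\succsim_A y$, completing the proof.
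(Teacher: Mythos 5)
Your forward direction is exactly the paper's argument and is fine. The reverse direction also follows the paper's skeleton (assume $x\succsim_A y$ and $y\succ x$, get $yAx\succ x$, and invoke the strict clause of \nameref{Dom-C} to force $y\succ_A x$); the detour through $x\sim_A y$ is harmless but unnecessary. The problem is the step you yourself flag as the main obstacle: you never actually prove that $A$ non-null and $y\succ x$ imply $yAx\succ x$, and the route you sketch for it cannot work from the stated hypotheses. The implication in question concerns only $\succsim$, and Monotonicity plus C-Completeness plus non-nullness do not deliver it: take $S=\{s_1,s_2\}$, $A=\{s_1\}$, $X=[0,1]$, and $\succsim$ represented by $V(f)=\min\{f(s_1),f(s_2)\}$. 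This $\succsim$ is monotone and complete on constants, $A$ is non-null (e.g.\ $1A g\succ g$ for $g=(0,1)$), and $1\succ 0$, yet $1A0\sim 0$. So your plan of deriving ``$A$ is null'' from ``$yAx\sim x$ for some $y\succ x$'' is chasing a false implication, and no amount of care with Monotonicity and \nameref{Dom-C} will rescue it. (A secondary slip: you invoke C-Completeness to order $f_0Ag_0$ against $g_0$, but these are not constant acts, and $f_0Ag_0\not\sim g_0$ by itself does not give a strict preference either way for an incomplete relation.)

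To be fair, the paper's own proof asserts the same step in one line (``it follows from Monotonicity and the fact that $A$ is non-null that $yAx\succ x$''), so you correctly located the only nontrivial point. But in the paper this step is justified by the context in which the lemma is applied: in \autoref{stickyrep} the unconditional preference has an SEU representation, so non-nullness is equivalent to $\mu(A)>0$ and $yAx\succ x$ is immediate from additive separability; in the multi-prior extension the needed property is built into the definition of ``unambiguously non-null.'' Your proof needs to either import that separability explicitly or restrict the claim accordingly; as written, the gap is real and your proposed patch would fail.
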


\begin{proof}
Since $\succsim$ is complete for constant acts, suppose $x\succsim y$.  By Monotonicity of $\succsim$ this is equivalent to $xAy \succsim y$ for all $A$. Then by \nameref{Dom-C}, $x \succsim_A y$.  Suppose that $x \succsim_A y$ but $y\succ x$.  Then it follows from Monotonicity and the fact that $A$ is non-null that $yAx \succ x$. From \nameref{Dom-C} it follows that $y \succ_A x$, a contradiction.  Hence $x \succsim y$.  
\end{proof}

\begin{lemma} Suppose $\succsim^*$ admits a multi-prior expected utility representation $(u,\mathcal{M})$. 
For each $A \in \Sigma$ and all $f,g \in \F,$ $$f Ag \succsim^* g \Longleftrightarrow fAh \succsim^* gAh \mbox{ for all } h\in\F.$$
\end{lemma}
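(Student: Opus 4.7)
The plan is to exploit the separability of the expected-utility criterion across $A$ and $A^c$. For any belief $\mu$ and any act $h' \in \F$, we may split
\[\sum_{s\in S} u(h'(s))\mu(s) = \sum_{s\in A} u(h'(s))\mu(s) + \sum_{s\in A^c} u(h'(s))\mu(s),\]
and both comparisons in the statement involve pairs of acts that coincide on $A^c$. This is what drives the whole argument.

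For the nontrivial direction ($\Rightarrow$), I would unpack $fAg \succsim^* g$ via the representation: it is equivalent to
\[\sum_{s\in A}u(f(s))\mu(s) + \sum_{s\in A^c}u(g(s))\mu(s) \;\geq\; \sum_{s\in A}u(g(s))\mu(s) + \sum_{s\in A^c}u(g(s))\mu(s)\]
holding for every $\mu \in \mathcal{M}$. Cancelling the common $A^c$-sum reduces this to the belief-wise condition
\[(\star)\qquad \sum_{s\in A}u(f(s))\mu(s) \;\geq\; \sum_{s\in A}u(g(s))\mu(s)\quad\text{for all }\mu \in \mathcal{M}.\]
For an arbitrary $h \in \F$, the analogous expansion of $fAh \succsim^* gAh$ introduces the sum $\sum_{s\in A^c}u(h(s))\mu(s)$ on both sides, which again cancels because $fAh$ and $gAh$ agree on $A^c$, so the comparison reduces to exactly $(\star)$. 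Since $(\star)$ has already been established from $fAg \succsim^* g$, we conclude $fAh \succsim^* gAh$ for every $h$.

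For the reverse direction ($\Leftarrow$), it suffices to apply the hypothesis at $h = g$, noting that $gAg = g$, which immediately yields $fAg \succsim^* g$. There is no genuine obstacle in this proof; the only care needed is to carry out each cancellation prior-by-prior in $\mathcal{M}$, since the multi-prior criterion quantifies universally over $\mu \in \mathcal{M}$ rather than collapsing to a single expected-utility representation.
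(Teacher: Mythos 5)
Your proof is correct and follows essentially the same route as the paper's: both reduce each comparison, prior by prior, to the condition $\sum_{s\in A}u(f(s))\mu(s)\geq\sum_{s\in A}u(g(s))\mu(s)$ for all $\mu\in\mathcal{M}$ by cancelling the common $A^c$-terms. The paper writes this as a single chain of equivalences (which covers the converse automatically), while you handle the reverse direction via $h=g$; the substance is identical.
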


\begin{proof}  Let $A$ be any event and let $f,g,h$ be any acts in $\F$. 
\begin{align*} f Ag \succsim^* g \Leftrightarrow & \sum_{s \in A} u(f(s))\mu(s) +  \sum_{s \in S\backslash A} u(g(s))\mu(s) \\ & \geq 
					  	 \sum_{s \in A} u(g(s))\mu(s) +  \sum_{s \in S\backslash A} u(g(s))\mu(s) \text{ for all } \mu \in \mathcal{M} \\
 \Leftrightarrow & \sum_{s \in A} u(f(s))\mu(s) \geq  \sum_{s \in A}u(g(s))\mu(s)  \text{ for all } \mu \in \mathcal{M}\\
 \Leftrightarrow & \sum_{s \in A} u(f(s))\mu(s) +  \sum_{s \in S\backslash A} u(h(s))\mu(s) \\ & \geq   \sum_{s \in A} u(g(s))\mu(s) +  \sum_{s \in S\backslash A} u(h(s))\mu(s) \text{ for all } \mu \in \mathcal{M} \\  
  \Leftrightarrow &  fAh \succsim^* gAh
\end{align*}
This lemma only relies on $A$ being non-null, and it in fact holds for more general state spaces.
\end{proof}

\begin{lemma}\label{FBlemma}
Suppose $\succsim^*$ admits a multi-prior expected utility representation $(u,\mathcal{M})$. 
For each $A \in \Sigma_+^*$ and all $f,g \in \F,$  say $f \unrhd_A^* g$ if $fAh \succsim^* gAh$ for some $h$. Then $\unrhd_A$ admits a multi-prior expected utility representation $(u,\B(\mathcal{M},A))$.
\end{lemma}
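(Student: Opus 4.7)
The plan is to apply the preceding lemma to rewrite $\unrhd_A^*$ in a convenient form, and then translate summations restricted to $A$ into expectations under the corresponding Bayesian posteriors.

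First, I would observe that the ``for some $h$'' in the definition of $\unrhd_A^*$ is equivalent to ``for all $h$'' by the preceding lemma applied to $\succsim^*$, so $\unrhd_A^*$ is well-defined. Chasing the chain of equivalences in that lemma, with $h$ arbitrary, yields
\[
f \unrhd_A^* g \iff \sum_{s\in A} u(f(s))\mu(s) \;\ge\; \sum_{s\in A} u(g(s))\mu(s) \text{ for every } \mu \in \mathcal{M}.
\]

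Second, I would change measure. For any $\mu \in \mathcal{M}$ with $\mu(A)>0$, divide by $\mu(A)$ and use that $\B(\mu,A)$ is supported on $A$ to rewrite the inequality as
\[
\sum_{s\in S} u(f(s))\,\B(\mu,A)(s) \;\ge\; \sum_{s\in S} u(g(s))\,\B(\mu,A)(s).
\]
For any $\mu \in \mathcal{M}$ with $\mu(A)=0$, both sides of the original inequality equal $0$ and so the constraint is vacuous. Since $\B(\mathcal{M},A)=\{\B(\mu,A):\mu\in\mathcal{M},\mu(A)>0\}$, the displayed condition is equivalent to
\[
\sum_{s\in S} u(f(s))\pi(s) \;\ge\; \sum_{s\in S} u(g(s))\pi(s) \text{ for every } \pi \in \B(\mathcal{M},A),
\]
which is exactly the multi-prior expected utility representation $(u,\B(\mathcal{M},A))$.

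Finally, since $A \in \Sigma_+^*$, at least one $\mu \in \mathcal{M}$ assigns positive probability to $A$ (otherwise no strict preference $x\succ^* y$ would yield $xAy \succ^* y$), so $\B(\mathcal{M},A)$ is nonempty and the representation is nondegenerate. Closedness and convexity of $\B(\mathcal{M},A)$ follow from the corresponding properties of $\mathcal{M}$ together with continuity of the Bayes map on $\{\mu:\mu(A)>0\}$; if desired, one can replace $\B(\mathcal{M},A)$ by its closed convex hull without changing the induced relation. The main subtlety is simply the bookkeeping around priors with $\mu(A)=0$, but as noted these drop out harmlessly from the unanimity condition.
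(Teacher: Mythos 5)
Your argument is essentially the paper's own proof: invoke the preceding lemma to make the choice of $h$ irrelevant, reduce $f \unrhd_A^* g$ to unanimity of $\sum_{s\in A}u(f(s))\mu(s) \ge \sum_{s\in A}u(g(s))\mu(s)$ over $\mathcal{M}$, and renormalize by $\mu(A)$ to land on $\B(\mathcal{M},A)$. Your extra bookkeeping (priors with $\mu(A)=0$ dropping out, nonemptiness from $A\in\Sigma_+^*$, closedness and convexity of $\B(\mathcal{M},A)$) is correct and slightly more careful than the paper's version, but it is the same route.
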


\begin{proof}  By Lemma 2, $\unrhd_A^*$ does not depend on the choice of $h$. It then follows that  $f \unrhd_A^* g$ if and only if for every $\mu \in \mathcal{M}$
\begin{align*} \sum_{s \in A} u(f(s))\mu(s) +  \sum_{s \in S\backslash A} u(h(s))\mu(s) & \geq  \sum_{s \in A} u(g(s))\mu(s) +  \sum_{s \in S\backslash A} u(h(s))\mu(s)  \\
\LRA \,  \frac{1}{\mu(A)}\sum_{s \in A} u(f(s))\mu(s) & \geq  \frac{1}{\mu(A)}\sum_{s \in A} u(g(s))\mu(s) \\
\LRA  \, \sum_{s \in A} u(f(s))\pi(s) & \geq \sum_{s \in A} u(g(s))\pi(s) \text{ for all } \pi \in \B(\mathcal{M},A),
\end{align*}
where $\B(\mathcal{M},A)=\{\pi \in \Delta(S)\mid \pi = \B(\mu,A) \text { for some } \mu \in \mathcal{M} \}$.
\end{proof}

Note that when $\succsim^*$ is complete, then we have the case of subjective expected utility and $\mathcal{M}$ and $\B(\mathcal{M},A)$ are singleton sets.

\subsection{Proof of \autoref{stickyrep}}

\begin{proof} Necessity is clear so only sufficiency is shown. By \nameref{CsEU}, there exists a $(u_A,\mu_A)$ for each $A \in \Sigma$ that represents $\succsim_A$. Further, by \nameref{Dom-C}, preferences satisfy ordinal preference consistency (see \autoref{OPClemma}): $x \succsim y$ if and only if $x \succsim_A y$. Hence we may assume $u=u_A$ for all $A$. Further, as $X$ is convex, it is without loss to suppose $[-1,1] \subset u(X)$, as $u$. 

For each $A \in \Sigma_+$, define the binary relation $\unrhd_A$ on $\F$ by $ f \unrhd_A g $ if and only if $fAg \succsim g$.  Then by \autoref{FBlemma}, $\unrhd_A$ has an expected utility representation $(u, \B(\mu,A))$.

Next, define the set $D_A:=\{\pi \in \Delta(S) \mid \delta \mu + (1-\delta) \B(\mu,A) \text{ for } \delta \in [0,1]\}$. By \nameref{Dom-C}, it follows that $\mu_A \in D_A$. Suppose not, then as $D_A$ and $\{\mu_A\}$ are closed, convex sets, there exists a separating hyperplane $a \in \R^{|S|}$ so that $\mu_A \cdot a > \hat{\mu}\cdot a $ for all $\hat{\mu} \in D_A$. Let $\bar{z}=\max_{\hat{\mu} \in D_A}\hat{\mu}\cdot a$ and let $\bar{a}=a-\bar{z}(1, \ldots,1)$. Then \begin{equation}\label{SHT1}\mu_A \cdot \bar{a} > 0 \ge \hat{\mu}\cdot \bar{a} \text{ for all } \hat{\mu} \in D_A.\end{equation}
We may suppose that $\bar{a} \in [-1,1]^{|S|}$, since we can always multiply both sides of (\ref{SHT1}) by $\epsilon>0$. Further, there are acts $f,g$ such that $u(g(s))-u(f(s))=\bar{a}(s)$ for every $s \in S$. Consequently, 
\begin{equation}\label{SHT2}\sum_{s \in S}\mu_A(s)u(g(s))  >  \sum_{s \in S}\mu_A(s)u(f(s))\end{equation}
and 
\begin{equation}\label{SHT3}\sum_{s \in S}\hat{\mu}(s)u(f(s))  \ge  \sum_{s \in S}\hat{\mu}(s)u(g(s)) \text{ for all } \hat{\mu} \in D_A.\end{equation}
By construction, $\mu, \B(\mu,A) \in D_A$ and so by (\ref{SHT3}) it follows that $fAg \succsim g$, $f\succsim g$. However, by (\ref{SHT2}) $g \succ_A f$, which contradicts \nameref{Dom-C}. Hence $\mu_A \in D_A.$ As $A$ was arbitrary, the preceding argument applies to any non-null $A$. It is standard to show that $u$ is unique up to a positive, affine transformation and, since $u(x) > u(y)$ for some $x,y \in X$, that $\mu$ and $\mu_A$ are also unique.  Given uniqueness of $\mu$ and $\mu_A$, it is trivial that there is a unique $\delta(A)$ that satisfies \autoref{belief} whenever $\mu(A)<1$.  When $\mu(A)=1$ (i.e., $A^c$ is null),  $\mu=\B(\mu,A)=\mu_A$ and $\succsim =\succsim_A$.  When $A,A^c$ are both non-null, define $\delta: \Sigma_+ \ra [0,1]$ as the unique solution to \[\mu_A=\delta(A)\mu + (1-\delta(A))\B(\mu,A).\] 
When $A^c$ is null, define $\delta(A)$ arbitrarily.  
\end{proof}

\begin{comment}

\begin{lemma}
Suppose $\{\succsim_A\}_{A\in\Sigma}$ satisfy \nameref{CsEU} and \nameref{Dom-C}. If $f(s) \succsim g(s)$ for all $s \in S$, then $f \succsim_A g$  for any $A \in \Sigma$. 
\end{lemma}

\begin{proof}
Suppose $f(s) \succsim g(s)$ for all $s \in S$.  Then it clearly follows that $fAg \succsim g$, and hence by \nameref{Dom-C} it follows that $f \succsim_A g.$ 
\end{proof}

\begin{lemma}
For each $A \in \Sigma$, $A$ non-null, it follows from \nameref{CsEU} that there is a utility index $u_A :X \rightarrow \mathbb{R}$ and probability $\mu_A$ such that $$f\succsim_A g \Longleftrightarrow  \sum_{s}u(f(s))\mu_A(ds) \geq \sum_{s}u(f(s))\mu_A(ds).$$
\end{lemma}

\begin{proof}  This follows from standard results. 
\end{proof}

\begin{lemma}
For each $A \in \Sigma$ and all $f,g \in \F,$ $$f Ag \succsim g \Longleftrightarrow fAh \succsim gAh \mbox{ for all } h\in\F.$$
\end{lemma}

\begin{proof}  
\begin{align*} f Ag \succsim g \Leftrightarrow & \sum_A u(f(s))\mu(s) +  \sum_{s\backslash A} u(g(s))\mu(s) \geq 
					  	 \sum_A u(g(s))\mu(s) +  \sum_{S\backslash A} u(g(s))\mu(s) \\
 \Leftrightarrow & \sum_A u(f(s))\mu(s) \geq  \sum_A u(g(s))\mu(s) \\
 \Leftrightarrow & \sum_A u(f(s))\mu(s) +  \sum_{S\backslash A} u(h(s))\mu(s) \geq \\
  &\sum_A u(g(s))\mu(s) +  \sum_{S\backslash A} u(h(s))\mu(s)  
 \Leftrightarrow  fAh \succsim gAh
\end{align*}
This lemma only relies on $A$ being non-null, and in fact holds for more general state spaces.
\end{proof}

\end{comment}

\subsection{Proof of \autoref{constant}}

\begin{proof}Theorem 1 shows that for each $A$, there is a $\delta(A) \in [0,1]$ satisfying the representation. Suppose \nameref{WC} holds. It is sufficient to show that for any pair of non-null events $A,B \in \Sigma$, such that both $\mu(A)<1$ and $\mu(B)<1$, $\delta(B) = \delta(A)$.  

{\bf Case 1: ($\mu(A\cup B) < 1$).} Fix any non-null $C$ satisfying $C \cap (A\cup B) = \varnothing$ and choose $x,y,z$ such that $xCy\sim_A z$. By  \nameref{WC}, it follows that $xCy\sim_B z$.  Since preferences admit a conservative SEU representation, it follows that
\begin{equation}\label{Indif1}\mu_A(C)u(x) + (1-\mu_A(C))u(y) = u(z),\end{equation} 
\begin{equation}\label{Indif2} \mu_B(C)u(x) + (1-\mu_B(C))u(y)= u(z).\end{equation}

Since $x,y$ are arbitrary, it is without loss to suppose that $u(x) > u(y)$. Then, combining (\ref{Indif1}) and (\ref{Indif2}), it is clear that $\mu_A(C)=\mu_B(C)$. Since $C \cap (A\cup B)$, it follows that $\B(\mu,A)(C)=0=\B(\mu,B)(C)$, and hence $\mu_A(C)=\delta(A)\mu(C)$ and  $\mu_B(C)=\delta(B)\mu(C)$. Hence equality is true if and only if $\delta(A) = \delta(B)$. 

{\bf Case 2: ($\mu(A\cup B)=1$).}  Suppose $A\cap B \neq \varnothing$. Then notice that $A,A\cap B$ ($B,A\cap B$) satisfy the conditions of Case 1. It follows then that $\delta(A)=\delta(A\cap B) =\delta(B).$  Now suppose $A\cap B = \varnothing$. Since there are at least three non-null events, without loss there exists some non-null set $A'$ such that $A' \subset A$ and $\mu(A')<\mu(A)$.\footnote{Alternatively, there exists $A'$ such that $A \subset A'$ and $\mu(A)<\mu(A')$, but this is just a relabeling. Note that such a pair of nested events must exist for at least one of $A$ or $B$.}  
Then we may again apply Case 1 to $A, A',A\setminus A'$, showing that $\delta(A)=\delta(A')=\delta(A\setminus A').$ Further, since $A\cap B = \varnothing$, it follows that $\mu(A' \cup B)<1$, and so $\delta(A')=\delta(B)$.
\end{proof}

\subsection{Proof of \autoref{GCBprop}} 

\begin{proof}Let $A \ge_l B$ and fix any non-null $C$ satisfying $C \cap (A\cup B) = \varnothing$. Choose any $x,y,z$ such that $x \succ y$ and suppose $xCy\sim_B z$. It follows that \begin{equation}\label{confbias1}\mu_B(C)u(x) + (1-\mu_B(C))u(y) \ge \mu_A(C)u(x) + (1-\mu_A(C))u(y).\end{equation} 
Since $u(x)-u(y) > 0 $, simplifying the above yields $\mu_B(C) \ge \mu_A(C).$ The result then follows simply from the fact that $\mu_B(C)=\delta(B)\mu(C)$ and $\mu_A(C)=\delta(A)\mu(C)$. The reverse direction is routine. \end{proof}

\subsection{Proof of \autoref{comparative}}

\begin{proof}Suppose $\{\succsim^i_A\}_{A\in\Sigma}$ admit representations $(u_i,\mu_i,\delta_i)_{i=1,2}$ where $u_1\approx u_2$. Then without loss $u_1=u_2=u$. Suppose agent $1$ is more conservative than agent $2$. Note that $A$ and $A^c$ are $\succsim_i$-non-null if and only if $\mu_i(A) \in (0,1)$. Pick some $x,y_1,y_2$ satisfying the conditions of \autoref{CompareDef}; then $u(x)\mu_1(A) + u(y_1)(1-\mu_i(A)) = u(x)\mu_2(A) + u(y_2)(1-\mu_2(A))$. 

{\bf Case 1: ($\mu_1(A)=\mu_2(A)$).}  It is without loss to suppose $y_1=y_2=y$ for some $y$ and that $u(x)=0$. Further, we may ignore the dependence of $\mu$ on $i$. If $\{\succsim^1_A\}_{A\in\Sigma}$ is more conservative than $\{\succsim^2_A\}_{A\in\Sigma}$, it follows that $\delta_1(A)(1-\mu(A))u(y) \leq \delta_2(A)(1-\mu(A))u(y)$, or $\delta_1(A) \geq \delta_2(A)$. The reverse direction is similar. 

{\bf Case 2: ($\mu_1(A)\neq\mu_2(A)$).} Note that \begin{equation}\label{compareq1}\mu_1(A)-\mu_2(A) = (1-\mu_2(A))-(1-\mu_1(A)) \neq0.\end{equation} By hypothesis, $u(x)\mu_1(A) + u(y_1)(1-\mu_i(A)) = u(x)\mu_2(A) + u(y_2)(1-\mu_2(A))$ from which, when combined with (\ref{compareq1}), it follows that  \begin{equation}\label{compareq2}(1-\mu_1(A))[u(y_1)-u(x)] = (1-\mu_2(A))[u(y_2)-u(x)].\end{equation} From $\{\succsim^1_A\}_{A\in\Sigma}$ is more conservative than $\{\succsim^2_A\}_{A\in\Sigma}$, we conclude that 
\[V^1_A(xAy_1)=\delta_1(A)[\mu_1(A)u(x) + (1-\mu_1(A))u(y_1)] + (1-\delta_1(A))u(x)\]
\[\leq \delta_2(A)[\mu_2(A)u(x) + (1-\mu_2(A))u(y_2)] + (1-\delta_2(A))u(x) = V^2_A(xAy_1). \]
Simplifying the above expression yields \[\delta_1(A)(1-\mu_1(A))u(y_1) - \delta_1(A)(1-\mu_1(A))u(x) \] \[\leq \delta_2(A)(1-\mu_2(A))u(y_2) - \delta_2(A)(1-\mu_2(A))u(x),\]
which directly implies \begin{equation}\label{compareq3}\delta_1(A)(1-\mu_1(A))[u(y_1) - u(x)] \leq \delta_2(A)(1-\mu_2(A))[u(y_2) -u(x)].\end{equation} The result that $\delta_1(A) \ge \delta_2(A)$ then follows by combining (\ref{compareq3}) with (\ref{compareq2}) and the facts that $u(y_i)-u(x) < 0$  and $0<\mu_i(A)<1$ for $i=1,2$. 

\end{proof}

\subsection{Proof of \autoref{ambiguityrep}}

\begin{proof}The proof of this theorem is similar to the proof of \autoref{stickyrep}.  First, for each $A \in \Sigma$, there exists a pair $(u_A,\mathcal{M}_A)$ such that $\succsim_A^*$ is represented by \autoref{mpeu}.  For any $A \in \Sigma_+^*$,  define the binary relation $\unrhd_A^*$ on $\F$ by $ f \unrhd_A^* g $ if and only if $fAg \succsim^* g$.  Then by \autoref{FBlemma},  $\unrhd_A^*$ has a multi-prior expected utility representation $(u, \B(\mathcal{M},A))$. 

As in the proof of \autoref{stickyrep}, since $\succsim_A^*$ is complete on constant acts for every $A$ it follows from \autoref{OPClemma} that for any $A \in \Sigma_+^*$, $x \succsim y$ if and only if $x \unrhd_A^* y$ if and only if $x \succsim_A y$. Hence it is without loss to suppose that $u=u_A$. 

Next, let $D_A:=conv(\mathcal{M}, \B(\mathcal{M},A))$. Since $\mathcal{M}$ is a closed subset of $\Delta(S)$, it is compact. Further since $A$ is unambiguously non-null, $\B(\mathcal{M},A)$ is closed and hence also compact. Further, they are both convex. Then $D_A$ is also compact and convex. 

Now, suppose for contradiction that $\mathcal{M}_A \subseteq D_A$ is false. Then there exists some $\tilde{\mu}_A \in \mathcal{M}_A \setminus D_A$.  Following an argument similar to a \autoref{stickyrep}, there exists a separating hyperplane $a \in \R^{|S|}$ so that $\tilde{\mu}_A \cdot a > \hat{\mu}\cdot a $ for all $\hat{\mu} \in D_A$. Let $\bar{z}=\max_{\hat{\mu} \in D_A}\hat{\mu}\cdot a$ and let $\bar{a}=a-\bar{z}(1, \ldots,1)$. Then \begin{equation}\label{SHT4}\tilde{\mu}_A \cdot \bar{a} > 0 \ge \hat{\mu}\cdot \bar{a} \text{ for all } \hat{\mu} \in D_A.\end{equation}
We may suppose that $\bar{a} \in [-1,1]^{|S|}$, since we can always multiply both sides of (\ref{SHT4}) by $\epsilon>0$. Further, there are acts $f,g$ such that $u(g(s))-u(f(s))=\bar{a}(s)$ for every $s \in S$. Consequently, 
\begin{equation}\label{SHT5}\sum_{s \in S}\tilde{\mu}_A(s)u(g(s))  >  \sum_{s \in S}\tilde{\mu}_A(s)u(f(s))\end{equation}
and 
\begin{equation}\label{SHT6}\sum_{s \in S}\hat{\mu}(s)u(f(s))  \ge  \sum_{s \in S}\hat{\mu}(s)u(g(s)) \text{ for all } \hat{\mu} \in D_A.\end{equation}
By construction, $\mathcal{M}, \B(\mathcal{M},A) \subset D_A$ and so by (\ref{SHT6}) it follows that $fAh \succsim^* gAh$ and $f\succsim^* g$. However, by (\ref{SHT5}) $f \not\succsim_A^* g$, which contradicts \nameref{WUC}. Hence $\mathcal{M}_A \subseteq D_A.$

\end{proof}

\bibliographystyle{ecta}

\bibliography{/Users/matthewkovach/dropbox/References/references.bib} 

\end{document}